\newcommand{\bidmark}{\rm u}
\newcommand{\inmark}{\rm i}
\newcommand{\outmark}{\rm o}
\newcommand{\kin}{\ensuremath{{k^\text{(i)}}}}
\newcommand{\kout}{\ensuremath{{k^\text{(o)}}}}
\newcommand{\kbid}{\ensuremath{{k^\text{(u)}}}}
\newcommand{\kvec}{\ensuremath{\mathbf{k}}}
\newcommand{\xvec}{\ensuremath{\mathbf{x}}}
\newcommand{\zeroone}{\ensuremath{ \{0,1\} } }
\newcommand{\onthresholdi}{\phi_{{\rm on}, i}}
\newcommand{\offthresholdi}{\phi_{{\rm off}, i}}
\newcommand{\onthreshold}{\phi_{{\rm on}}}
\newcommand{\offthreshold}{\phi_{{\rm off}}}
\newcommand{\avgdeg}{k_{\rm avg}}
\newcommand{\stochf}{\bar{f}}
\newcommand{\smurl}[1]{{\small \url{#1}}}
\newtheorem{mythm}{Lemma}
\newcommand{\dee}[1]{\mbox{d}#1}
\begin{document}

\title{Dynamical influence processes on networks:\\ General theory and applications to social contagion}
%% \title{Dynamical influence processes on networks: \\
%% General theory and applications to social contagion}
%% \title{Dynamics of influence processes on networks:\\
%%   Complete mean-field theory;
%%   the roles of response functions, connectivity, and synchrony;
%%   and applications to social contagion}

\author{Kameron Decker Harris}
\altaffiliation{
  Current address:
  Department of Applied Mathematics,
  University of Washington,
  Seattle, WA 98103 USA
}
\email{kamdh@uw.edu}
\affiliation{
  Department of Mathematics and Statistics,
  Vermont Advanced Computing Core,
  Vermont Complex Systems Center,   
  and Computational Story Lab,
  University of Vermont,
  Burlington, VT 05405 USA
}
\author{Christopher M.\ Danforth}
\email{chris.danforth@uvm.edu}
\affiliation{
  Department of Mathematics and Statistics,
  Vermont Advanced Computing Core,
  Vermont Complex Systems Center,   
  and Computational Story Lab,
  University of Vermont,
  Burlington, VT 05405 USA
}
\author{Peter Sheridan Dodds}
\email{peter.dodds@uvm.edu}\
\affiliation{
  Department of Mathematics and Statistics,
  Vermont Advanced Computing Core,
  Vermont Complex Systems Center,   
  and Computational Story Lab,
  University of Vermont,
  Burlington, VT 05405 USA
}
% alternate:
% \affiliation{
%   Department of Mathematics and Statistics,
%   University of Vermont,
%   Burlington, VT 05405 USA\\ %}
%  %\affiliation{
%   Vermont Advanced Computing Core,
%   University of Vermont,
%   Burlington, VT 05405 USA\\ %}
% %\affiliation{
%   Vermont Complex Systems Center,   
%   University of Vermont,
%   Burlington, VT 05405 USA\\
%   Computational Story Lab,
%   University of Vermont,
%   Burlington, VT 05405 USA
% }

\date{August 28, 2013}

\begin{abstract}
  We study binary state dynamics on a network
where each node acts in response to the average state of its neighborhood.
Allowing varying amounts of stochasticity in both the network and
node responses, we find different outcomes in random and
deterministic versions of the model. 
In the limit of a large, dense network, however,
we show that these dynamics coincide.
We construct a general mean field theory for random networks
and show this predicts that the 
dynamics on the network are a smoothed version of the average
response function dynamics.
Thus, the behavior of the system can range from 
steady state to chaotic depending on the
response functions,
network connectivity,
and update synchronicity.
As a specific example, 
we model the competing tendencies of imitation
and non-conformity by incorporating an off-threshold
into standard threshold models of social contagion.
In this way we attempt to capture important aspects of 
fashions and societal trends.
We compare our theory to extensive simulations
of this ``limited imitation contagion'' model
on Poisson random graphs, 
finding agreement between the mean-field theory
and stochastic simulations.

\end{abstract}

\maketitle

\section{Introduction}

Networks continue to be an exploding
area of research due to their
generality and ubiquity in
physical, biological, technological, and social settings.
Dynamical processes taking place on networks are
now recognized as the most natural description for a number
of phenomena. 
These include 
neuron behavior in the brain \citep{coombes2006a},
cellular genetic regulation \citep{milo2002network},
ecosystem dynamics and stability \citep{bascompte2009disentangling},
and infectious diseases \citep{rohani2010contact}.
This last category, 
the study of biological contagion,
is in many ways similar to {\it social} contagion,
which refers to the spreading of 
ideas, fashions, or behaviors among people
\citep{ugander2012structural,centola2010a}.
This concept underlies the vastly important contemporary area
of viral marketing, 
driven by the ease with which media can be shared and spread 
through social network websites.

In this work,
we present results
for a very general model of 
networked map dynamics,
motivated by models of social contagion.
We will describe our model in a social
context, but it is more general
since it is a type of boolean network \citep{aldana2003a}.
These are closely related to physical models of
percolation \citep{stauffer1994a}
and
magnetism \citep{newman2003a, aldana2003a},
and they have been employed in a number of fields
such as computational neuroscience \citep{schneidman2006a}, 
ecology \citep{campbell2011a}, and others.
Nodes, which in our case represent people, are allowed two possible states.
These could encode rioting or not rioting
\citep{granovetter1978a},
buying a particular style of tie
\citep{granovetter1986a},
liking a band or style of music,
or taking a side in a debate
\citep{galam2008a}.
Each node has a response function,
a map which determines the state the node will take in response
to the average state of nodes in its neighborhood.
The model can thus represent many behaviors,
as long as there are only two mutually exclusive possibilities,
where agents make a choice based on the average of their neighbors'
choices.
\citet{schelling1971a, schelling1973a} and \citet{granovetter1978a}
pioneered the use of threshold response functions in such models
of collective social behavior.
This was based on the intuition that, 
for a person to adopt some new behavior, 
the fraction of the population exhibiting
it might need to exceed some critical value, 
the person's threshold.
These threshold social contagion models,
which are a subclass of our general model,
have already been studied on networks
\citep{watts2002a,dodds2004a}.

In our theoretical analysis,
we focus on the derivation and analysis
of dynamical master equations
that describe the expected evolution
of the system state in the general influence process model.
These master equations are given in 
both exact form and mean-field approximations.
We also show how certain dense network limits lead to the convergence
of the dynamics to the average response function map dynamics.

We then apply the
general theory to a particular
limited imitation contagion model
\citep{dodds2013a}.
Nodes act according to competing tendencies
of imitation and non-conformity.
One can argue that these two ingredients 
are essential to all trends; indeed,
Simmel, in his classic essay ``Fashion'' \citeyearpar{simmel1957a},
believed that these are the main forces 
behind the creation and destruction of fashions.
Our model is not meant to be quantitative, except
perhaps in carefully designed experiments.
Nodes in the model lack memory of their past states,
which is likely an important effect in the adoption of real fashions.
It does capture qualitative features with which we are familiar: 
some trends take off and some do not, and
some trends are stable while others vary wildly through time. 
% This is closely related to the seminal work
% of \citet{schelling1971a} and \citet{granovetter1978a}.

In Section~\ref{sec:model}, 
we define the general model and its
deterministic and stochastic variants. 
In Section~\ref{sec:fixedanalysis},
we provide an analysis of the model when the underlying network is fixed.
In Section~\ref{sec:meanfield}, 
we develop a mean-field theory
of the model on generalized random networks.
In Section~\ref{sec:licmodel}, we consider the model
on Poisson random networks with a specific kind of response function
that reflects the limited imitation
we expect in many social contagion processes.
For this specific case, we compare
the results of simulations and theory. 
Finally, in Section~\ref{sec:conclusions},
we present conclusions
and directions for further research.

\section{General model}
\label{sec:model}

Let $\mathscr{G} = (V, E)$ be a network with
$N=|V|$ nodes,
where $V$ is the node set and $E$ is the edge set.
We let 
$A = A(\mathscr{G})$
denote the {\it adjacency matrix};
entry
$A_{ij}$
is the number of edges from 
node $j$ to node $i$.
Assign each node
$i \in V$ 
a response function
$f_i : [0, 1] \to \{0, 1\}$,
and let
$\xvec(0) \in \zeroone^N$ 
be the vector of initial node states.
At time step $t$,
each node $i$ computes the fraction 
\begin{equation}
  \label{eq:phiexpression}
  \phi_i(t) = \frac{\sum_{j=1}^N A_{ij} x_j (t)}{\sum_{j=1}^N A_{ij}}
\end{equation}
of their neighbors in 
$\mathscr{G}$
who are active and takes the state 
\begin{equation}
  \label{eq:detupdatemap}
  x_i(t+1)
  = f_i \left( \phi_i(t) \right)
\end{equation}
at the next time step. 

The above defines a deterministic dynamical system
given a network and
set of response functions.
We call this a realization of the model
\citep{aldana2003a}.
Each node is in either the 0 or 1 state;
we refer to these as the off/inactive
and on/active states, respectively.
In the context of contagion, these would be the 
susceptible and infected states.
With these binary states, 
our model is a particular kind of 
Boolean network.
These exhibit rich dynamics and have a long history
in the literature.
Unfamiliar readers should consult the review by \citet{aldana2003a}
and references therein.
Note that each node reacts only to the fraction of its neighbors
who are active, rather than the absolute number, 
and the identities of the input nodes do not matter.
Each node's input varies from 0 to 1 in steps 
of $1/k_i$, where 
$k_i = \sum_{j=1}^N A_{ij}$
is node $i$'s degree 
(in-degree if $\mathscr{G}$
is not a simple graph).

The behavior of the model depends strongly on the response functions
$f_i$.
Leaving these undetermined, 
the principle feature of the model is its
neighborhood-averaging property.
Because of local averaging, one might expect
that the dynamics of the network global average activity
might be close to the map dynamics of
the average of the $f_i$.
We show this is the case in dense enough networks in
Sections~\ref{sec:densepoisson}
and  
\ref{sec:mfanalysis}.
This averaging property also introduces an invariance to
the number of inputs a given node receives.

In the rest of this Section, 
we will describe some variations of the basic model
which also differentiate our model
from the Boolean networks extant in the literature.
This is mainly due to the response functions,
but also the type of random network on which the dynamics
take place,
varying amounts of stochasticity
introduced into the
networks and response functions, 
and the possibility of asynchronous updates.

\subsection{The networks considered}

The mean-field analysis in Section~\ref{sec:meanfield} 
is applicable to any network which can be characterized
by its degree distribution.
The vast majority of the theory of
random Boolean networks considers only regular random networks.
Fortunately, such theories are easily generalized to other types
of networks with independently chosen edges,
such as Poisson (Erd\"os-R\'enyi)
and configuration model random networks
\citep{newman2003a,bollobas2001a}.
We develop specific results for
Poisson random networks,
and these are the networks used for the example problem in
Section~\ref{sec:licmodel}.

\subsection{Stochastic variants}

The specific network 
and response functions
determine exactly which behaviors are possible.
These are chosen from some distribution of networks, such as
$\mathcal{G}(N, \avgdeg/N)$
(Poisson random networks on $N$ nodes with edge probability
$\avgdeg/N$), 
and some distribution of response functions.
In the example of Section~\ref{sec:licmodel},
the response functions are parameterized solely by two
thresholds, 
$\onthreshold$ 
and
$\offthreshold$,
so the distribution of response functions is
determined by the joint density 
$P( \onthreshold , \offthreshold )$.
Again, the specific network and response functions
define a realization of the model.
When these are fixed for all time, 
we have, in principle, full knowledge of
the possible model dynamics. 
Given an initial condition $\xvec(0)$,
the dynamics $\xvec(t)$ 
are deterministic and known for all $t \geq 0$.
As for all finite Boolean networks,
the system dynamics are eventually periodic, 
since the state space $\{0,1\}^N$ is finite \citep{aldana2003a}.

We allow for randomness in two parts of the model:
the network and response functions.
The network and responses are each either
fixed for all time or resampled each time step.
Taking all possible combinations yields
four different designs 
(see Table~\ref{tab:modeldesigns}).
If the dynamics are stochastic in any way,
the system is no longer
eventually periodic.
Fluctuations at the node level enable
a greater exploration of state space, and the behavior is 
comparable to that of the general class of discrete-time maps.
Roughly speaking, the mean-field theory we develop in 
Section~\ref{sec:meanfield} becomes more accurate as we introduce
more stochasticity.

In this paper, the network and response functions are either
fixed for all time or resampled every time step. 
One could tune smoothly between the two extremes by
introducing rates at which these reconfigurations occur.
These rates are inversely related to quantities that
behave like temperature,
one for the network and another for the response functions.
Holding a quantity fixed
corresponds to zero temperature, 
since there are no fluctuations.
Any randomness is quenched.
The stochastic and rewired cases correspond to 
high or infinite temperature,
because reconfigurations occur every time step.
This is an annealed version of the model.

\begin{table}
  \centering
  \begin{tabular}{l|c|c}
    & {\bf R}ewiring network & {\bf F}ixed network \\
    \hline
    {\bf P}robabilistic response & P-R & P-F \\
    {\bf D}eterministic response & D-R & D-F
  \end{tabular}
  \caption[The four different ways we implement the model,
    corresponding to differing amounts of quenched randomness.]
  {The four different ways we implement the model,
    corresponding to differing amounts of quenched randomness. 
    These are the combinations of fixed or rewired networks and
    probabilistic or deterministic response functions.
    In the thermodynamic limit of the rewired versions, 
    where the network and response functions
    change every time step, the mean-field theory 
    (Sec.~\ref{sec:meanfield})
    is exact.}
  \label{tab:modeldesigns}
\end{table}

\subsubsection{Rewired networks}
\label{sec:stochnet}

First, the network itself can change every time step.
This is the rewiring (R), as opposed to fixed (F),
network case.
For example, 
we could draw a new network from $\mathcal{G}(N, \avgdeg/N)$
every time step. 
This amounts to rewiring the links while keeping
the degree distribution fixed, 
and it is alternately known as a mean 
field, annealed, or random mixing variant
as opposed to a 
fixed network or quenched model
\citep{aldana2003a}.
% Statistical physicists also use the words
% ``annealed'' (for stochastic) and
% ``quenched'' (for fixed)
% \citep[see][for an explanation and a short history of the terms]{aldana2003a}.  
\subsubsection{Probabilistic responses}
\label{sec:stochfunc}

Second, the response functions can change every time step.
This is the probabilistic (P),
as opposed to the 
deterministic (D), response function case.
For our social contagion example,
there needs to be a well-defined distribution 
$P( \onthreshold , \offthreshold )$
for the thresholds.
For large $N$,
this amounts to having a single
response function, the expected response function
\begin{equation}
  \label{eq:fstoch}
  \stochf (\phi) = \int \dee \onthreshold 
  \int \dee \offthreshold \; 
  P( \onthreshold , \offthreshold ) 
  f(\phi; \onthreshold, \offthreshold) .
\end{equation}
We call $\stochf: [0,1] \to [0,1]$ the {\it probabilistic response function}.
Its interpretation is the following.
For an updating node with a fraction $\phi$ of
active neighbors at the current time step,
then, at the next time step, the node assumes the 
active state with probability 
$\stochf(\phi)$ 
and the inactive state with probability 
$1 - \stochf(\phi)$.

\subsection{Update synchronicity}

Finally, we introduce a parameter $\alpha$ for the probability that
a given node updates. 
When $\alpha = 1$, all nodes update each
time step, and the update rule is said to be synchronous.
When $\alpha \approx 1/N$, only one node is expected to update with each
time step, and the update rule is said to be effectively asynchronous.
This is equivalent to a randomly ordered sequential update.
For intermediate values, $\alpha$ is the expected fraction of
nodes which update each time step.

\section{Fixed networks}

\label{sec:fixedanalysis}

Consider the case where the response functions and
network are fixed (D-F), but the update may be synchronous or asynchronous.
Extend the definition of 
$x_i(t)$ 
to now be the 
{\it probability} that node $i$ is in the active state at time $t$.
Note that this agrees with our previous definition
as the {\it state} of node $i$
when
$x_i(t) = 0$ or 1.
% and let 
% $f_i (\phi) = f_i(\phi; \onthresholdi, \offthresholdi)$.
Then the $x_i$ follow the master equation
\begin{equation}
  \label{eq:quenchedmap}
  x_i (t+1) = 
  \alpha
  f_i \left( 
    \frac{ \sum_{j=1}^N A_{ij} x_j (t)}{\sum_{j=1}^N A_{ij}}
  \right)
  + (1-\alpha) x_i(t),
\end{equation}
which can be written in matrix-vector notation as 
\begin{equation}
  \label{eq:quenchedmapmatrix}
  \xvec (t+1) = \alpha \, \mathbf{f} \left( T \xvec(t) \right) 
  + (1-\alpha) \xvec(t).
\end{equation}
Here $T = D^{-1}A$ is sometimes called the
transition probability matrix 
(since it also occurs in the context of a random walker),
$D = \text{diag}( k_i)$ is the diagonal degree matrix, and 
$\mathbf{f} = (f_i)$
\footnote{When there are isolated nodes, 
  their degrees are 0, and
  thus the denominator in the master equation 
  is zero and $D$ is singular.
  If the initial network contains isolated nodes, 
  we set all entries in the corresponding rows of $T$ to zero.
}.
If $\alpha = 1$, then
$\xvec(t) \in \{0,1\}^N$ and
we recover the fully deterministic
response function dynamics given by 
\eqref{eq:phiexpression} and \eqref{eq:detupdatemap}.

% For the rest of this section, we will focus on
% the case of quenched Erd\H{o}s-R\'enyi random networks.
% These are also called the $G(n,p)$ model.
% Here, the adjacency matrix is an $N \times N$ matrix
% with Bernoulli random variables for the off-diagonal entries.
% In the case of undirected networks, $A$ is constrained to be symmetric.
% Specifically, $A_{ij} \in \zeroone$ for all $i, j$,
% $A_{ij} = 1$ with probability $p$ when $i \neq j$, and
% $A_{ij} = 0$ with probability 1 for $i = j$. 
% We can let $z = p N$ be the expected 
% degree for a node and parametrize the network by $z$ instead of $p$.

\subsection{Asynchronous limit}

Here, we show that when $\alpha \approx 1/N$, 
time is effectively continuous and the dynamics can be described by an 
ordinary differential equation. This is similar to the analysis of 
\citet{gleeson2008a}.
Consider Eqn.~\eqref{eq:quenchedmapmatrix}. Subtracting $\xvec(t)$ from
both sides and setting $\Delta \xvec(t) = \xvec(t+1)-\xvec(t) $ and 
$\Delta t = 1$
yields
\begin{equation}
  \frac{\Delta \xvec(t)}{\Delta t} = 
  \alpha \left( \mathbf{f}( T \xvec(t) ) - \xvec(t) \right).
\end{equation}
Since $\alpha$ is assumed small, the right hand side is small, and 
thus $\Delta \xvec(t)$ is also small.
Making the continuum approximation 
$\dee \xvec(t)/ \dee t \approx \Delta \xvec(t)/\Delta t$
yields the differential equation
\begin{equation}
  \label{eq:diffeq}
  \frac{\dee \xvec}{\dee t} = 
  \alpha \left( \mathbf{f}( T \xvec ) - \xvec \right).
\end{equation}
The parameter $\alpha$ sets the time scale for the system.
Below we see that, from their form, 
similar asynchronous, continuous time limits
apply to the dynamical equations in the densely
connected case, 
Eqn.~\eqref{eq:densemap},
and in the mean-field
theory, 
Eqns.~\eqref{eq:edgemap} and~\eqref{eq:nodemap}.

\subsection{Dense network limit for Poisson random networks}

\label{sec:densepoisson}

Mathematical random graph and random matrix theories
often deal with condensation results,
where quantities of interest, such as adjacency matrices,
become overwhelmingly concentrated around some typical
value in a limit.
These limits are dense in the sense that 
$\avgdeg \to \infty$ 
in the thermodynamic limit
$N \to \infty$.
The mean field theory of
Section~\ref{sec:meanfield}
applies to sparse graphs, 
which have finite 
$\avgdeg$
in the thermodynamic limit.
The following result is particular to Poisson random networks,
however, similar results should hold for other random networks
with corresponding dense limits.

Define the normalized Laplacian matrix as
$\mathcal{L} \equiv I - D^{-1/2} A D^{-1/2}$,
where $I$ is the identity \citep{west2001a}.
So $T = D^{-1/2} (I- \mathcal{L}) D^{1/2}$. 
% We see that $T$ is similar to $I - \mathcal{L}$, and the associated
% change of basis is a rescaling of axis $i$ by $\sqrt{k_i}$.
Let
$\mathbf{1}_N$
denote the length-$N$ column vector of ones.
\citet{oliveira2009a}
has shown that when
$\avgdeg$
grows at least as fast as
$\log N$,
there exists a typical
normalized Laplacian matrix
$\mathcal{L}^\text{typ} 
=  I - \mathbf{1}_N \mathbf{1}_N^T /N$
such that the actual 
$\mathcal{L} \approx \mathcal{L}^\text{typ}$.
In this limit, the degrees should also be approximately uniform,
$k_i \approx \avgdeg$ for all $i \in V$,
since the coefficient of variation of degrees
vanishes in the Poisson model as
$\avgdeg \to \infty$.

If we use the approximations
$\mathcal{L} \approx \mathcal{L}^\text{typ}$
and
$D \approx \text{diag} \, (\avgdeg)$,
then
$T \approx T^\text{typ} = \mathbf{1}_N \mathbf{1}_N^T / N$.
Since
\begin{equation*}
  T^\text{typ} \xvec(t) = 
  \sum_{i=1}^N x_i (t)/N \equiv \phi(t),
\end{equation*}
$T^\text{typ}$ operating on the state gives the network average activity,
denoted
$\phi$
without any subscript.
Using the previous approximations in 
Eqn.~\eqref{eq:quenchedmapmatrix}
and averaging over all nodes, 
we find
\begin{equation}
  \phi(t+1) = \alpha  \stochf (\phi(t)) + (1-\alpha) \phi(t) 
  \equiv \Phi(\phi(t); \alpha)
  \label{eq:densemap}
\end{equation}
in the large $N$ limit.
This requires
the average of nodes' individual response functions 
$\sum_{i=1}^N f_i/N$
to converge in a suitable sense to the probabilistic
response function $\stochf$, Eqn.~\eqref{eq:fstoch}.
% This amounts to assuming a law of large numbers for the response functions.
Note that $\alpha$ tunes between the probabilistic response
function
$\Phi(\phi;1) = \stochf(\phi)$
and the line
$\Phi(\phi; 0) = \phi$.
Also, the fixed points of $\Phi$ are fixed points of
$\stochf$, but their stability will depend on $\alpha$.

We conclude that 
when the network is dense, it ceases to affect the dynamics, since
each node sees a large number of other nodes. 
The network is effectively the complete graph.
In this way we recover
the map models of \citet{granovetter1986a},
which are derived for a well-mixed population.

\section{Mean-field theory}

\label{sec:meanfield}

Making a mean-field calculation refers to replacing
the complicated interactions among many particles
by a single interaction with some effective external field.
There are analogous techniques for understanding network
dynamics. 
Instead of considering the $|E|$ interactions
among the $N$ nodes, network mean-field theories derive
self-consistent expressions for the overall behavior of the network
after averaging over large sets of nodes.
These have been fruitful in the study of random Boolean
networks \citep{derrida1986a} and
can work well when networks are non-random \citep{melnik2011a}.

We derive a mean-field theory, in the thermodynamic limit,
for the dynamics of the general model
by blocking nodes according to their degree class. 
This is equivalent to nodes retaining their degree but rewiring 
edges every time step.
The model is then part of the well-known class of random mixing
models with non-uniform contact rates.
Probabilistic (P-R) and deterministic (D-R) response functions
result in equivalent behavior for these random mixing models.
The important state variables end up being the active density of 
{\em stubs},
i.e.\ half-edges or node-edge pairs.
In an undirected network without degree-degree correlations, 
the state is described by a single variable $\rho(t)$. 
In the presence of correlations we must
introduce more variables 
$\left\{ \rho_k(t) \right\}$ 
to deal with the relevant degree classes.

\subsection{Undirected networks}

\label{sec:mfundir}

To derive the mean-field equations in the simplest 
case---undirected, uncorrelated random networks---consider
a degree $k$
node at time $t$. 
The mean-field hypothesis states that the probability
a given stub is active is uniform across all nodes
and equal to $\rho$.
Then the probability that an average
node is in the active state at time $t+1$
given a density $\rho$ of active stubs is 
\begin{equation}
  \label{eq:Fkmap}
  F_k ( \rho; \stochf ) = \sum_{j=0}^{k} \binom{k}{j} 
  \rho^j \left(1-\rho \right)^{k-j} 
  \stochf \left(j/k \right),
\end{equation}
where each term in the sum counts the contributions 
from having 
$j = 0, 1, \ldots, k$ 
active neighbors.
The probability of 
choosing a random stub which ends at a degree $k$ node is 
$q_k = k p_k/\avgdeg$
in an uncorrelated random network \citep{newman2003a}.
This is sometimes called 
the edge-degree distribution.
So if all of the nodes update synchronously, the active density
of stubs at $t+1$ will be
\begin{equation}
  \label{eq:gmap}
  g(\rho ; p_k, \stochf)
  = \sum_{k=1}^{\infty} q_k F_k \left( \rho; \stochf \right)
  = \sum_{k=1}^{\infty} \frac{k p_k}{\avgdeg} F_k \left( \rho; \stochf \right).
\end{equation}
Finally, if each node only updates with probability $\alpha$,
we have the following map for the density of active stubs:
\begin{equation}
  \begin{aligned}
    \rho(t+1) &= \alpha \, g \left( \rho(t); p_k, \stochf \right)  
    + (1-\alpha) \rho(t)\\
    &\equiv G \left( \rho(t); p_k, \stochf, \alpha \right) .
  \end{aligned}
  \label{eq:edgemap}
\end{equation}
By a similar argument, the active density of nodes is given by
\begin{equation}
  \label{eq:nodemap}
  \begin{aligned}
    \phi(t+1) &= \alpha \, h \left(\rho(t); p_k, \stochf \right) +
    \left( 1-\alpha \right) \phi(t)\\
    &\equiv H \left( \rho(t), \phi(t); p_k, \stochf, \alpha \right),
  \end{aligned}
\end{equation}
where
\begin{equation}
  h \left( \rho; p_k, f \right) = 
  \sum_{k=0}^{\infty} p_k F_k \left( \rho; \stochf \right) .
\end{equation}
Note that the stub or edge-oriented state variable $\rho$
contains all of the dynamically important information,
rather than the node-oriented variable $\phi$.
This is because nodes can only influence each other 
through edges,
so the number of active edges
is a more important 
measure of the network activity than
the number of active nodes.
We derived these equations in terms of stubs, 
meaning that
$\rho$
actually keeps track of both the node and edge information.
Eqns.~\eqref{eq:gmap} and \eqref{eq:edgemap}
can be interpreted as a branching process for
the density of active stubs.

\subsection{Analysis of the map equation and another dense limit}
\label{sec:mfanalysis}

Here we study the properties of the mean field
maps $G$ and $H$,
Eqns.~\eqref{eq:edgemap} and \eqref{eq:nodemap},
which in turn depend on $F_k$, Eqn.~\eqref{eq:Fkmap}.
The function 
$F_k(\rho; \stochf)$ 
is known in polynomial approximation theory
as the $k$th Bernstein polynomial (in the variable $\rho$) of 
$\stochf$ \citep{phillips2003a}.
Bernstein polynomials have important applications
in computer graphics due to their 
``shape-preserving properties'' \citep{pena1999a}.
The Bernstein operator $\mathbb{B}_k$ takes 
$\stochf \mapsto F_k$. 
This is a linear, positive operator which preserves convexity
for all $k$ and exactly interpolates the endpoints 
$\stochf(0)$
and 
$\stochf(1)$. 
Immediate consequences include that
each $F_k$ is a smooth function and the $m$th derivatives
$F_k^{(m)}(x) \to \stochf^{(m)}(x)$ 
where
$\stochf^{(m)}(x)$ 
exists.
For $\stochf$ concave down,
such as the tent or logistic maps, 
then $F_k$ is concave down for all
$k$ and
$F_k$ increases to $\stochf$ 
($F_k \nearrow \stochf$)
as $k \to \infty$.
This convergence is typically slow. 
Importantly, $F_k \nearrow \stochf$ implies that 
$g\left( \rho; p_k, \stochf \right) \leq \stochf$  
% and
% $G(\rho; p_k, f, \alpha) \leq \Phi(\rho; \alpha, f)$
for any degree distribution $p_k$.
%% \begin{enumerate}
%% \item linear $f(x) = ax + b$ has $F_k = f$ and $g = f$.
% \item 
% \item there is no Gibbs phenomenon when considering $F_k$ as 
%   an interpolant of $f$
% \end{enumerate}

In some cases,
the dynamics of the undirected mean-field theory
given by 
$\rho(t+1)=G \left(\rho(t) \right)$, 
Eqn.~\eqref{eq:edgemap},
are effectively those of the map $\Phi$, 
from the dense limit Eqn.~\eqref{eq:densemap}.
% This is   the shape-preserving properties
% of Bernstein polynomials.
We see that  $g$, Eqn.~\eqref{eq:gmap},
can be seen as the expectation of a sequence
of random functions $F_k$
under the edge-degree distribution $q_k$.
Indeed, this is how it was derived.
From the convergence of the $F_k$'s,
we expect that
$g(\rho; p_k, \stochf) \approx \stochf(\rho)$ 
if the average degree $\avgdeg$ 
is ``large enough'' and 
the edge-degree distribution has a ``sharp enough'' peak about $\avgdeg$
(we will clarify this soon). 
Then as $\avgdeg \to \infty$,
the mean-field coincides with the
dense network limit we found for Poisson random networks, 
Eqn.~\eqref{eq:densemap}.
A sufficient condition for
this kind of convergence
is the same that we used in justifying 
the uniform degree approximation in
Section~\ref{sec:densepoisson}:
The coefficient of variation of the degree distribution
must vanish as $\avgdeg \to \infty$.
Equivalently, the standard deviation $\sigma(\avgdeg)$
of the degree distribution must be $o(\avgdeg)$.
In Appendix~\ref{ax:limitthm} we prove this as Lemma~\ref{thm:1}.

In general, if the original degree distribution $p_k$ is
characterized by having mean $\avgdeg$, variance $\sigma^2$, and 
skewness $\gamma_1$, then the edge-degree distribution $q_k$
will have mean $\avgdeg + \sigma^2/\avgdeg$ and variance
$\sigma^2 [ 1 + \gamma_1 \sigma/\avgdeg - (\sigma/\avgdeg)^2]$.
Considering the behavior as $\avgdeg \to \infty$, we can conclude that
requiring 
$\sigma = o(\avgdeg)$ 
and
$\gamma_1 = o(1)$ 
are sufficient
conditions on $p_k$ to apply Lemma~\ref{thm:1}.
Poisson degree distributions ($\sigma = \sqrt{\avgdeg}$ and
$\gamma_1 = \avgdeg^{-1/2}$) fit
these criteria.
Heavy-tailed families of distributions, in general, do not.

The fact that we can take a dense limit
in the mean-field model and find the same result using
rigorous random matrix theory is worth noting.
In general, mean-field models are not rigorously justified.
For finite $N$, 
the quenched dynamics,
which we know are deterministic and eventually periodic, 
are very different from the annealed dynamics,
which we will show can be chaotic.
The equivalence of the two limits indicates that
quenched and annealed dynamics become indistinguishable
as $N \to \infty$.
We believe that the approach to such a singular limit should reveal
interesting discrepancies between the two models.

\subsection{Generalized random networks}
\label{sec:mixednets}

In more general random networks, 
nodes can have both undirected and directed incident edges.
We denote node degree by a vector
$\kvec = (\kbid, \kin, \kout)^T$  
(for undirected, in-, and out-degree)
and write the degree distribution as
$p_\kvec \equiv P(\kvec)$.
There may also
be correlations between node degrees.
We encode correlations of this type by 
the conditional probabilities
\begin{gather*}
  p^{(\bidmark)}_{\kvec,\kvec'} \equiv
  P( \kvec, \mathrm{undirected} | \kvec') \\
  p^{(\inmark)}_{\kvec,\kvec'} \equiv
  P( \kvec, \mathrm{incoming} | \kvec') \\
  p^{(\outmark)}_{\kvec,\kvec'} \equiv
  P( \kvec, \mathrm{outgoing} | \kvec'),
\end{gather*}
the probability that an edge starting at a degree $\kvec'$
node ends at a degree $\kvec$ node and is, respectively,
undirected, incoming, or outgoing relative to the destination degree
$\kvec$ node. 
We introduced this convention in a series of papers 
\citep{payne2011a, dodds2011a}.
These conditional probabilities can also be defined
in terms of the joint distributions of node types connected by
undirected and directed edges.
We omit a detailed derivation, since it
is similar to that in Section~\ref{sec:mfundir}
and similar to the equations for the time evolution of
a contagion process \citep[][Eqns.~(13--15)]{payne2011a}
\citep[see also][]{gleeson2007a}.

\begin{widetext}
The result is a coupled system of equations for the
density of active stubs which now may depend on
node type ($\kvec$) and
edge type (undirected or directed):
\begin{equation}
\begin{aligned}
  \rho_{\kvec}^{(\bidmark)}(t+1) &= 
   (1 - \alpha) \rho_{\kvec}^{(\bidmark)}(t)
   + \alpha \sum_{\kvec'} p^{(\bidmark)}_{\kvec,\kvec'}
   \sum_{j_u=0}^{\kbid'} \sum_{j_i=0}^{\kin'}
   \binom{\kbid'}{j_u} \binom{\kin'}{j_i}  \\
   & \qquad \times 
  \left[ \rho_{\kvec'}^{(\bidmark)}(t) \right]^{j_u}
  \left[1 - \rho_{\kvec'}^{(\bidmark)}(t) \right]^{(\kbid'-j_u)} 
  \left[ \rho_{\kvec'}^{(\inmark)}(t) \right]^{j_i}
  \left[1 - \rho_{\kvec'}^{(\inmark)}(t) \right]^{(\kin'-j_i)}
  \\
  & \qquad \times
  \stochf \left( \frac{j_u + j_i}{\kbid' + \kin'} \right) ,
\end{aligned}
\label{eq:mfgenbid}
\end{equation}
\begin{equation}
\begin{aligned}
  \rho_{\kvec}^{(\inmark)}(t+1) &=
  (1 - \alpha) \rho_{\kvec}^{(\inmark)}(t) 
  + \alpha \sum_{\kvec'} p^{(\inmark)}_{\kvec,\kvec'}  
  \sum_{j_u=0}^{\kbid'} \sum_{j_i=0}^{\kin'}
  \binom{\kbid'}{j_u} \binom{\kin'}{j_i} 
  \\
  & \qquad \times
  \left[ \rho_{\kvec'}^{(\bidmark)}(t) \right]^{j_u}
  \left[1 - \rho_{\kvec'}^{(\bidmark)}(t) \right]^{(\kbid'-j_u)} 
  \left[ \rho_{\kvec'}^{(\inmark)}(t) \right]^{j_i}
  \left[1 - \rho_{\kvec'}^{(\inmark)}(t) \right]^{(\kin'-j_i)} \\
  & \qquad \times
  \stochf \left( \frac{j_u + j_i}{\kbid' + \kin'} \right).
\end{aligned}
\label{eq:mfgendir}
\end{equation}
The active fraction of nodes at a given time is
\begin{equation}
\begin{aligned}
  \phi(t+1) &= (1 - \alpha) \phi(t)
  + \alpha \sum_{\kvec} p_{\kvec} 
  \sum_{j_u=0}^{\kbid} \sum_{j_i=0}^{\kin}
  \binom{\kbid}{j_u} \binom{\kin}{j_i}  \\
  & \qquad \times
  \left[ \rho_{\kvec}^{(\bidmark)}(t) \right]^{j_u}
  \left[1 - \rho_{\kvec}^{(\bidmark)}(t) \right]^{(\kbid-j_u)} 
  \left[ \rho_{\kvec}^{(\inmark)}(t) \right]^{j_i}
  \left[1 - \rho_{\kvec}^{(\inmark)}(t) \right]^{(\kin-j_i)} \\
  & \qquad \times \stochf \left( \frac{j_u + j_i}{\kbid + \kin} \right) .
\end{aligned}
\label{eq:mfgennodes}
\end{equation}
\end{widetext}
Because these expressions are very similar to the undirected case,
we expect similar convergence properties to those in
Sec.~\ref{sec:mfanalysis}.
However, an explicit investigation of this convergence
is beyond the scope of the current paper.

\section{Limited imitation contagion model}

\label{sec:licmodel}

As a motivating example of these networked map dynamics,
we study an extension of the classical threshold models 
of social contagion
\citep[such as][among others]{schelling1971a, schelling1973a, 
  granovetter1978a, watts2002a,dodds2004a}.
In threshold models,
a node becomes active if the active fraction of its friends
surpasses its threshold.
What differentiates our limited imitation contagion
model from the standard models is that
the response function includes an off-threshold, above which
the node takes the inactive state.
We assign 
each node
$i \in V$ 
an on-threshold
$\onthresholdi$
and an off-threshold 
$\offthresholdi$,
requiring
$0 \leq \onthresholdi \leq \offthresholdi \leq 1$.
Node $i$'s response function 
$f_i (\phi_i) = f_i(\phi_i; \onthresholdi, \offthresholdi)$
is 1 if 
$\onthresholdi \leq \phi_i \leq \offthresholdi$ 
and 0 otherwise.
See Figure~\ref{fig:onoffthresh} for an example on-off
threshold response function.

This is exactly the model of 
\citet{granovetter1986a}, but on a network.
We motivate this choice with the following
\citep[also see][]{granovetter1986a}.
(1) Imitation: the active state becomes favored as the fraction of active 
neighbors surpasses the on-threshold (bandwagon effect).
(2) Non-conformity: the active state is eventually less favorable 
with the fraction of active neighbors past the off-threshold
(reverse bandwagon, snob effect).
(3) Simplicity: in the absence of any raw data of ``actual''
response functions, which are surely highly context-dependent and 
variable, we choose arguably the simplest
deterministic functions which capture
imitation and non-conformity.

A crucial difference between our model and many related
threshold models is that, in those models, 
an activated node can never reenter the susceptible state.
\citet{gleeson2007a} call this the permanently active property
and elaborate on its importance to their analysis.
Annealed or quenched models with the permanently active property
have monotone dynamics.
The introduction of the off-threshold builds in a mechanism for node
deactivation. 
Because nodes can now recurrently transition
between on and off states,
the deterministic dynamics can exhibit
a chaotic transient (as in random Boolean networks \citep{aldana2003a}),
and the long time behavior
can be periodic with potentially high period.
With stochasticity, the dynamics can be truly chaotic.

\begin{figure}
  \centering
  \includegraphics[width=7.5cm, height=7cm]{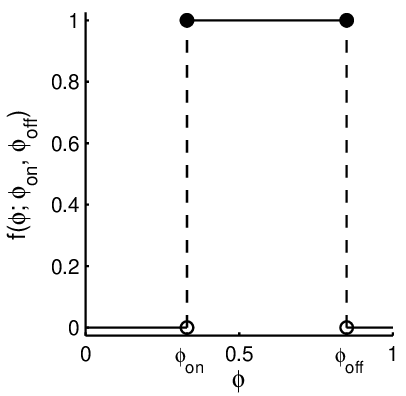}
  \caption[An example on-off threshold response function.]{
    \label{fig:onoffthresh}
    An example on-off threshold response function. 
    Here, 
    $\onthreshold = 0.33$ 
    and
    $\offthreshold = 0.85$. 
    The node will ``activate'' if 
    $\onthreshold \leq \phi \leq \offthreshold$, 
    where $\phi$
    is the fraction of its neighbors who are active. 
    Otherwise it takes the ``inactive'' state.
  }
\end{figure}

The networks we consider are 
Poisson random networks from $\mathcal{G}(N, \avgdeg/N)$.
The thresholds
$\onthreshold$ and $\offthreshold$ 
are distributed uniformly
on $[0, 1/2)$ and $[1/2, 1)$, respectively.
This distribution results in the probabilistic response function
(see Figure~\ref{fig:tentmap})
\begin{equation}
  \label{eq:tentmap}
  \stochf (\phi) = \left\{
    \begin{array}{lrl}
      2\phi \hfill& \mathrm{if} & 0 \leq \phi < 1/2, \\
      2-2\phi \hfill & \mathrm{if} & 1/2 \leq \phi \leq 1 .
    \end{array}
  \right.
\end{equation}
The tent map is a well-known
chaotic map of the unit interval \citep{alligood1996a}.
We thus expected that
the limited imitation model with this
probabilistic response function
to exhibit similarly interesting behavior.

\begin{figure}
  \centering
  \includegraphics[width=7.5cm,height=7cm]{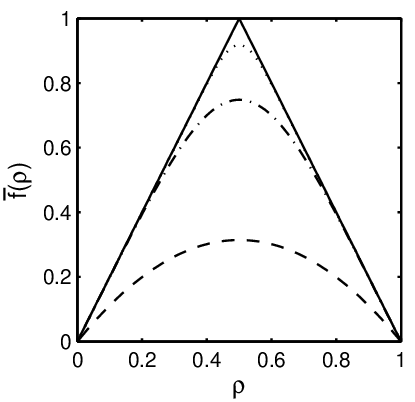}
  \caption[The tent map probabilistic response function $\stochf(\rho)$,
  Eqn.~\eqref{eq:tentmap}.]{
    \label{fig:tentmap}
    The tent map probabilistic response function
    $\stochf(\rho)$,
    Eqn.~\eqref{eq:tentmap}, 
    used in the limited imitation contagion model.
    This is compared to the edge maps
    $g(\rho;\avgdeg) = g(\rho; p_k, \stochf)$, 
    Eqn.~\eqref{eq:gmap}, 
    with 
    $\avgdeg = 1, 10, 100$
    (dashed, dot-dashed, and dotted lines).
    These $p_k$ are Poisson distributions with mean $\avgdeg$.
    As $\avgdeg$ increases, $g(\rho;\avgdeg)$ increases to 
    $\stochf(\rho)$.}
\end{figure}

\subsection{Analysis of the dense limit}
\label{sec:dense}

When the network is in the dense limit (Section~\ref{sec:densepoisson}), 
the dynamics follow
$\phi(t+1) = \Phi( \phi(t); \alpha )$, where
\begin{equation}
  \begin{aligned}
  \Phi(\phi; \alpha) &= \alpha \stochf(\phi) + (1-\alpha) \phi \\
  &=
  \left\{
    \begin{array}{lrl}
      (1+\alpha) \phi & \mathrm{if} & 0 \leq \phi < 1/2, \\
      (1- 3\alpha)\phi + 2\alpha &  \mathrm{if} & 1/2 \leq \phi \leq 1 .
    \end{array}
  \right.
  \end{aligned}
  \label{eq:densetent}
\end{equation}
Solving for the fixed points of $\Phi(\phi; \alpha)$, 
we find one at $\phi=0$ and
another at $\phi = 2/3$. When $\alpha < 2/3$, the 
nonzero fixed point is attracting for all initial conditions
except $\phi = 0$.
When $\alpha = 2/3$, $[1/2, 5/6]$ is an 
interval of period-2 centers. 
Any orbit will eventually land on one of these period-2 orbits.
When $\alpha > 2/3$, this interval
of period-2 centers ceases to exist,
and more complicated behavior ensues. 
Figure~\ref{fig:bifurcdense} shows the bifurcation diagram 
for $\Phi(\phi; \alpha)$.
From the bifurcation diagram,
the orbit appears to cover dense subsets of the unit interval
when $\alpha > 2/3$. 
The bifurcation diagram appears like
that of the tent map 
(not shown; see \cite{alligood1996a, dodds2013a})
except the branches to the right of the first bifurcation point
are separated here by the interval of period-2 centers.

\begin{figure}
  \centering
  \includegraphics[width=8.5cm, height=7cm]{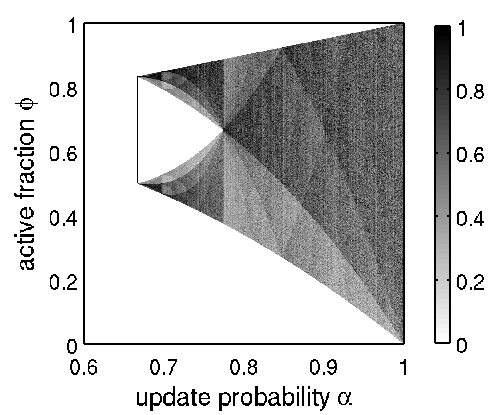}
  \caption[Bifurcation diagram for the dense map $\Phi(\phi;\alpha)$, 
  Eqn.~\eqref{eq:densetent}.]
  {Bifurcation diagram for the dense map $\Phi(\phi; \alpha)$,
    Eqn.~\eqref{eq:densetent}.
    This was generated by iterating the map at 1000
    $\alpha$ values between 0 and 1. 
    The iteration was carried out with
    3 random initial conditions for
    10000 time steps each, discarding the first 1000.
    The $\phi$-axis contains 1000 bins and the invariant density,
    shown by the grayscale value, is normalized
    by the maximum for each $\alpha$. With $\alpha < 2/3$,
    all trajectories go to the fixed point at $\phi = 2/3$.
  }
  \label{fig:bifurcdense}
\end{figure}

\subsubsection*{The effect of conformists, an aside}

Suppose some fraction $c$ of the population is made up of 
individuals without any off-threshold 
(alternatively, each of their off-thresholds $\offthreshold=1$). 
These individuals are conformist or purely pro-social
in the sense that they are content with
being part of the majority. 
For simplicity, assume $\alpha = 1$.
The map $\Phi(\phi; c) = 2 \phi$ for $0 \leq \phi < 1/2$
and $2 - 2(1-c)\phi$ for $1/2 \leq \phi \leq 1$.
If $c > 1/2$, then the equilibrium at 2/3 is stable. Pure conformists, then,
have a stabilizing effect on the process. 
We expect a similar effect when the network is not dense.

\subsection{Mean-field}

Here we mention the methods which were used
to compute the mean-field maps derived in 
Section~\ref{sec:meanfield}.
% Note that the active edge fraction 
% $\rho(t) \approx \phi(t)$,
% the active node fraction.
% This is because Poisson random networks are highly regular, with
% $q_k = k p_k / \avgdeg = p_{k-1} \approx p_k$.
% Thus the mean-field dynamics for active edge density 
% are effectively the same as for active node density.
% \todo{remove previous? seems out of place}
In this specific example,
we can write the degree-dependent map 
$F_k(\rho ; \stochf)$ 
in terms of incomplete regularized beta functions
$I_z(a,b)$ \citep{dlmf2012}. 
Since $\stochf$ is understood to be the tent map, 
we will write 
$F_k(\rho; \stochf) = F_k(\rho)$.
We find that
\begin{equation}
  \label{eq:Fkmapbetafn_bod}
   F_k (\rho) = 
   2 \rho -
   4 \rho I_{\rho} (M, k-M),
\end{equation}
% \begin{subequations}
%   \label{eq:Fkmapbetafn_bod}
%   \begin{align}
%     F_k (\rho) = & \, 2 \rho -
%     4 \rho I_{\rho} (M, k-M) \nonumber \\
%     & \qquad \qquad \qquad + 2 I_{\rho} (M+1, k-M)
%     \label{eq:Fkmapbetafn1_bod}\\
%     = & (2 - 2 \rho) -  2 I_{1-\rho} (k-M, M+1)\nonumber \\
%     & \qquad \qquad \qquad  + 4 \rho I_{1-\rho}(k-M, M) ,
%     \label{eq:Fkmapbetafn2_bod}
%   \end{align}
% \end{subequations}
where we have let $M=\lfloor k/2 \rfloor$ for clarity 
($\lfloor \cdot \rfloor$ and
$\lceil \cdot \rceil$ are the floor and
ceiling functions).
The details of this derivation are given 
in Appendix~\ref{ax:betafn}.

The map $g(\rho; p_k, \stochf)$ 
is parameterized here by the network parameter $\avgdeg$,
since $p_k$ is fixed as a Poisson distribution with mean $\avgdeg$ 
and $\stochf$ is the tent map, and
we write it as simply $g(\rho; \avgdeg)$. 
To evaluate $g(\rho; \avgdeg)$,
we compute $F_k(\rho)$ using Eqn.~\eqref{eq:Fkmapbetafn_bod}
and constrain the sum in 
Eqn.~\eqref{eq:gmap} to values of $k$ with
$\lfloor \avgdeg-3 \sqrt{\avgdeg} \rfloor 
\leq k \leq 
\lceil \avgdeg+3 \sqrt{\avgdeg} \rceil$.
This computes contributions to 
within three standard deviations of the average degree in
the network, requiring only $O (\sqrt{\avgdeg})$ evaluations of 
Eqn.~\eqref{eq:Fkmapbetafn_bod}. 
The representation in  
Eqn.~\eqref{eq:Fkmapbetafn_bod} allows for 
quick numerical evaluation of $F_k( \rho )$ for any $k$,
which we performed in MATLAB.

In Figure~\ref{fig:tentmap},  
we show
$g(\rho; \avgdeg)$ 
for 
$\avgdeg = 1, 10,$ and 100.
We confirm the conclusions of Section~\ref{sec:mfanalysis}:
$g(\rho; \avgdeg)$ 
is bounded above by 
$\stochf(\rho)$, 
and 
$g(\rho; \avgdeg) \nearrow \stochf(\rho)$
as
$\avgdeg \to \infty$. 
Convergence is slowest at $\rho = 1/2$, 
where the kink exhibited by the tent map
has been smoothed out by the
effect of the Bernstein operator.

\subsection{Simulations}

We performed stochastic
simulations of the limited imitation model
for the D-F, P-F, and P-R designs, 
in the abbreviations of Table~\ref{tab:modeldesigns}.
See Supplemental Material at 
[URL will be inserted by publisher] 
for the Python code.
Unless otherwise noted, $N = 10^4$.
For all of the bifurcation diagrams,
the first 3000 time steps were considered transient and discarded,
and the invariant density of $\rho$ 
was calculated from the following 1000 points.
For plotting purposes, the invariant density was normalized by its
maximum at those parameters. 
For example, in Figure~\ref{fig:bifurcdense} we plot
$P(\phi | \alpha) / \max_\phi P( \phi | \alpha)$ 
rather than the raw density 
$P(\phi | \alpha)$.

To compare the mean-field theory to those simulations,
we numerically iterated the edge map
$\rho(t+1) = G \left( \rho(t); \avgdeg, \alpha \right)$
for different values of $\alpha$ and $\avgdeg$.
We then created bifurcation
diagrams of the possible behavior in the mean-field
as was done for the simulations.

\subsection{Results}

\label{sec:results}

\begin{figure}
  \centering
  \includegraphics[width=8.5cm, height=6cm]
  {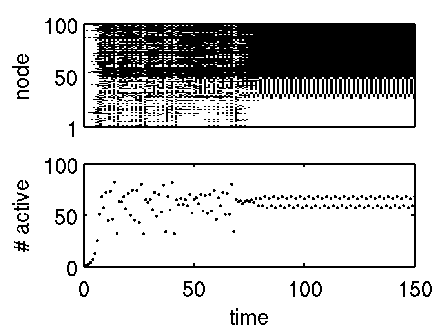}
  \caption[Deterministic (D-F) dynamics on a small network.]
  {
    Deterministic (D-F) dynamics on
    a small network. 
    Here, $N = 100$ and $\avgdeg = 17$. 
    The upper plot shows
    individual node states as a raster plot
    (black = active), sorted by their
    eventual level of activity.
    The lower plot shows the total number 
    of active nodes over time. 
    We see that the contagion takes off, 
    followed by a transient period of unstable behavior
    until time step 80, 
    when the system enters a macroperiod-4 orbit. 
    Note that individual nodes exhibit 
    different microperiods 
    (explained in Sec.~\ref{sec:results}).
    % On the right, 
    % we show the network itself with the initial seed node
    % in black in the lower right.
  }
  \label{fig:smalldet}
\end{figure}

To provide a feel for the deterministic dynamics,
we show the result of running the D-F model
on a small network in Figure~\ref{fig:smalldet}.
Here, $N =100$ and $\avgdeg=17$.
Starting from a single initially active node at
$t = 0$, the active population
grows monotonically over the next 6 time steps. 
From $t=6$ to $t=80$,
the transient time, the active fraction varies
in a similar manner to the dynamics in the stochastic and mean-field cases.
After the transient, the state collapses into a period-4 orbit. 
We call the overall period of the system its ``macroperiod,''
while individual nodes may exhibit different ``microperiods.''
Note that the macroperiod is the lowest common multiple of the individual
nodes' microperiods.
In Figure~\ref{fig:smalldet}, we observe microperiods 1, 2, and 4
in the timeseries of individual node activity.
In other networks, we have observed up to macroperiod 240
\citep{dodds2013a}.
A majority of the nodes end up frozen 
in the on or off state,
with approximately $20\%$ of the nodes exhibiting cyclical behavior
after collapse.
The focus of this paper has been the analysis of the on-off threshold
model, and the D-F
case has not been as amenable to analysis as the 
stochastic cases.
We offer a deeper examination through simulation
of the deterministic case in \cite{dodds2013a}.

\begin{figure}
  \centering
  \includegraphics[width=8.5cm]{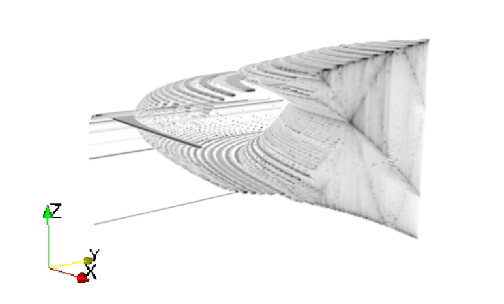}\\
  \includegraphics[width=8.5cm]{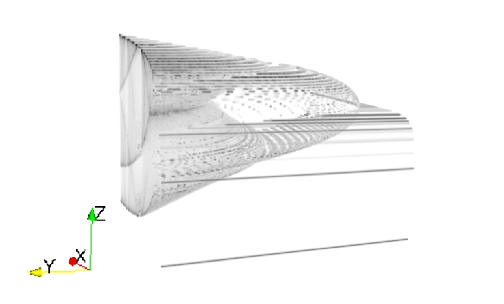}\\
  \includegraphics[width=8.5cm]{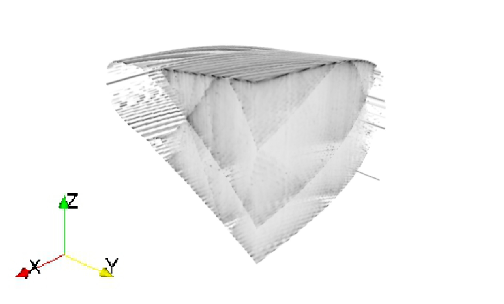}\\
  \caption[The 3-dimensional bifurcation diagram computed 
  from the mean-field theory.]
  {The 3-dimensional bifurcation diagram computed 
    from the mean-field theory.
    The axes 
    X = average degree $\avgdeg$, 
    Y = update probability $\alpha$, and 
    Z = active edge fraction $\rho$. 
    The discontinuities of the surface
    are due to the limited resolution of our simulations.
    See Figure~\ref{fig:bifurcslice_eqn} for the parameters used.
    This was visualized using Paraview.
    See Supplemental Material at 
    [URL will be inserted by publisher] 
    for the underlying data.}
  \label{fig:3dbifurc}
\end{figure}

\begin{figure*}
  \centering
  \includegraphics[width=17.5cm, height=9.5cm]
  {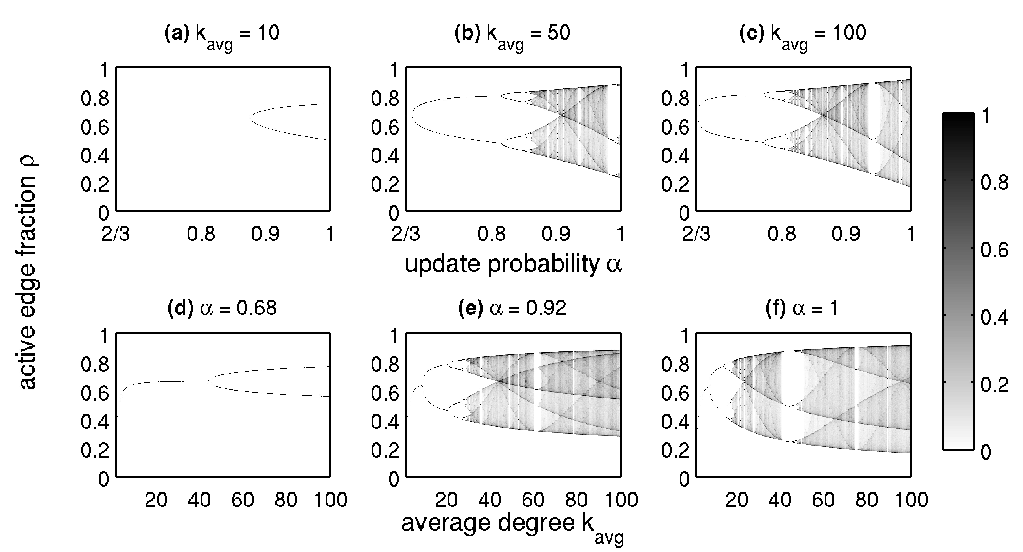}
  \caption[Mean-field theory bifurcation diagram slices 
  for various fixed values of $\avgdeg$ and $\alpha$.]{
    Mean-field theory bifurcation diagram slices 
    for various fixed values of $\avgdeg$ and $\alpha$.
    The top row (a--c) shows slices for fixed $\avgdeg$.
    As $\avgdeg \to \infty$, the $\avgdeg$-slice bifurcation diagram 
    asymptotically approaches the bifurcation diagram for the 
    dense map, Figure~\ref{fig:bifurcdense}. Note that the location of
    the first period-doubling bifurcation point approaches 2/3,
    and the bifurcation diagram more closely resembles
    Fig.~\ref{fig:bifurcdense},
    as $\avgdeg \to \infty$.
    The bottom row (d--f) shows slices for fixed $\alpha$.
    The resolution of the simulations was
    $\alpha = 0.664, 0.665, \ldots, 1$, $\avgdeg=1, 1.33, \ldots, 100$,
    and $\rho$ bins were made for 1000 points between 0 and 1.
  }
  \label{fig:bifurcslice_eqn}
\end{figure*}
\begin{figure*}
  \centering
  \includegraphics[width=17.5cm, height=9.5cm]
  {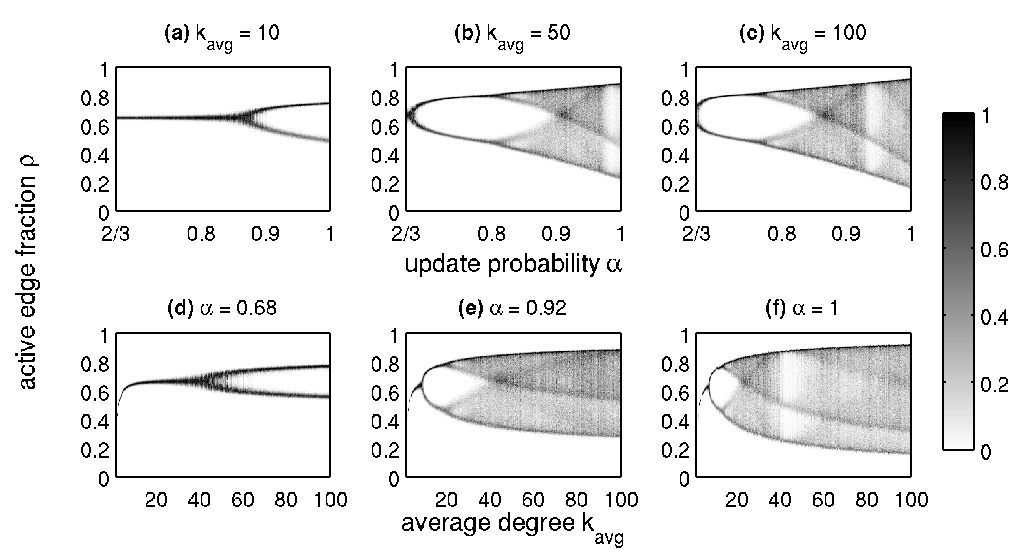}
  \caption[Bifurcation diagram from fully stochastic (P-R) simulations.]
  {
    Bifurcation diagram from fully stochastic (P-R) simulations,
    made in the same way as Figure~\ref{fig:bifurcslice_eqn}.
    The bifurcation structure of these stochastic simulations
    matches that of the mean-field theory 
    (Figure~\ref{fig:bifurcslice_eqn}), albeit with some blurring.
  }
  \label{fig:bifurcslice_sim}
\end{figure*}

We explore the mean-field dynamics by examining the limiting
behavior of the active edge fraction $\rho$
under the map 
$G \left(\rho; \avgdeg, \alpha \right)$.
We simulated the map dynamics
for a mesh of points in the $(\avgdeg, \alpha)$ plane.
We plot the 3-dimensional 
(3-d)
bifurcation structure of the mean-field theory in 
Figure~\ref{fig:3dbifurc}.
We also show 2-d bifurcation plots for
fixed $\avgdeg$ and $\alpha$ slices through this volume in
Figures~\ref{fig:bifurcslice_eqn} and~\ref{fig:bifurcslice_sim}.
For more visualizations of this bifurcation structure,
including movies of the bifurcation diagram as the parameters
are dialed and individual node dynamics,
see 
Supplemental Material at [URL will be inserted by publisher].
In all cases, the invariant density of $\rho$
is normalized by its maximum for that $(\avgdeg, \alpha)$ pair
and indicated by the grayscale value.

The mean-field map dynamics
exhibit period-doubling bifurcations
in both parameters $\avgdeg$ and $\alpha$. Visualizing the bifurcation
structure in 3-d (Figure~\ref{fig:3dbifurc}) shows 
interlacing period-doubling cascades in the two parameter dimensions.
These bifurcations are more clearly resolved when we take slices
of the volume for fixed parameter values.
The mean-field theory (Figure~\ref{fig:bifurcslice_eqn})
closely matches the P-R simulations (Figure~\ref{fig:bifurcslice_sim}).
The first derivative 
$\frac{\partial G}{\partial \rho} (\rho; \avgdeg, \alpha)
< \frac{\partial \Phi}{\partial \rho} (\rho; \alpha)$ 
for any finite $\avgdeg$, 
so the bifurcation point 
$\alpha = 2/3$ 
which we found for the dense map $\Phi$
is an upper bound for the first bifurcation point of $G$.
The actual location of the first bifurcation point depends on $\avgdeg$, 
but $\alpha =2/3$ becomes more accurate for higher $\avgdeg$ 
(it is an excellent approximation in Figures~\ref{fig:bifurcslice_eqn}c
and~\ref{fig:bifurcslice_sim}c, where $\avgdeg=100$). 
When $\alpha =1$, the first bifurcation point occurs at $\avgdeg \approx 7$.

The bifurcation diagram slices resemble each other and evidently
fall into the same universality class as the logistic map
\citep{feigenbaum1978a,feigenbaum1979a}.
This class contains all 1-d maps with a single,
locally-quadratic maximum. Due to the properties of the
Bernstein polynomials, $F_k(\rho; \stochf)$ will universally
have such a quadratic maximum for any concave down, continuous $\stochf$ 
\citep{phillips2003a}. 
So this will also be 
true for 
$g(\rho; \avgdeg, \stochf)$
with 
$\avgdeg$ finite, and we see that $\avgdeg$
partially determines the amplitude of that maximum in 
Figure~\ref{fig:tentmap}. 
Thus $\avgdeg$ acts as a bifurcation parameter. 
The parameter $\alpha$ tunes between 
$G \left(\rho; \avgdeg, 1 \right) = g(\rho; \avgdeg, \stochf)$ 
and 
$G \left(\rho; \avgdeg, 0 \right) = \rho$, 
so it has a similar effect.
Note that the tent map $\stochf$ and the dense limit map
$\Phi$ are kinked at their maxima, so their bifurcation
diagrams are qualitatively different from those of the mean-field.
The network, by locally averaging the node interactions,
causes the mean-field behavior to fall
into a different universality class than
the individual response function map.

\section{Conclusions}
\label{sec:conclusions}

We have described a very general class of
synchronous or asynchronous, binary state
dynamics occurring on networks.
We obtained an exact master equation and
showed that, when random networks are sufficiently dense,
the networked dynamics approach those
of the fully-connected case.
We developed a mean-field theory and
found that it also predicted the same limiting behavior.
The convergence of the mean-field map to the 
average response function is related to 
the Bernstein polynomials,
allowing us to employ many previous results
in order to analyze the mean-field map equation.
We also extended those mean-field equations
to correlated random networks.
We expect that a rigorous mathematical justification can be given
for the mean-field theory of the general influence model
on annealed graphs using a condensation theorem for
configuration model networks.

The general model we describe was motivated by 
the limited imitation model of social contagion.
We see that including an 
aversion to total conformity results in more
complicated, even chaotic dynamics,
as opposed to the simple spreading behavior typically seen
in the single threshold case.
The theory developed for the general case successfully
captured the behavior of the stochastic network dynamics.
We have focused on the rich structure of bifurcations
as the two parameters, 
update synchronicity $\alpha$
and average degree $\avgdeg$,
were varied.
We see that the universality class of the dynamics matches
those of the logistic map.
Using the mean-field theory, 
we can understand this as a result of the smoothing 
effect of the Bernstein polynomials on the tent map average 
response function.
However, this universality class will appear for any unimodular,
concave down stochastic response function.

The deterministic case, which we have barely touched on,
merits further study \citep[see][]{dodds2013a}.
In particular, we would like to 
characterize the distribution of periodic sinks, how the 
collapse time scales with system size, and how similar
the transient dynamics are to the mean-field dynamics.

Furthermore, the model should be tested on realistic networks.
These could include power law or small world random networks,
or real social networks gleaned from data.
One possibility would be to compare data such as
food choices \citep{pachucki2011a}
or Facebook likes \citep{ugander2012structural}
to the model behavior.
In a manner similar to \citet{melnik2011a}, 
one could evaluate the accuracy of the mean-field theory 
for real networks.

Finally, the ultimate usefulness of these social
models relies on a
better understanding of social dynamics themselves. 
Characterization of people's
``real'' response functions is therefore critical
\citep[some work has gone in this direction; see][]
{centola2010a, centola2011a, romero2011a, ugander2012structural}. 
Comparison of model output to large data sets, such
as observational data from social media or online experiments,
is an area for further experimentation.
This might lead to more complicated context- and history-dependent models.
As we collect more data and refine experiments, 
the eventual goal of quantifiably predicting social behavior, 
including fashions and trends,
seems achievable.

\section*{Acknowledgments}
We thank Thomas Prellberg, Leon Glass,
Joshua Payne and Joshua Bongard for discussions and suggestions,
along with the anonymous referees.
We are grateful for computational resources provided by the 
Vermont Advanced Computing Core supported by NASA (NNX 08A096G). 
KDH was supported by VT-NASA EPSCoR and a Boeing Fellowship; 
CMD was supported by NSF grant DMS-0940271; PSD was supported
by NSF CAREER Award \#0846668.
This work is based on the Master's Thesis of KDH.

\appendix

\section{Proof of Lemma 1}

\label{ax:limitthm}

\begin{mythm}
  \label{thm:1}
  For $k \geq 1$, let $f_k$ be continuous real-valued 
  functions on a compact domain $X$
  with $f_k \to f$ uniformly.
  Let $p_k$ be a probability mass function on $\mathbb{Z}^+$
  parameterized by its mean $\mu$ and with standard deviation
  $\sigma(\mu)$, assumed to be $o(\mu)$.
  Then,
  \begin{equation*}
    \label{eq:Top_lim}
    \lim_{\mu \to \infty} \left( \sum_{k=0}^\infty p_k f_k \right) = f .
  \end{equation*}
\end{mythm}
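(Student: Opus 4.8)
The plan is to show uniform convergence $\sum_k p_k f_k \to f$ on $X$ by splitting the sum into a ``bulk'' near the mean $\mu$, where $f_k$ is close to $f$, and a ``tail'' away from the mean, which carries negligible probability mass. First I would fix $\varepsilon > 0$. By uniform convergence $f_k \to f$, there is a $K_0$ such that $\sup_{x \in X} |f_k(x) - f(x)| < \varepsilon$ for all $k \geq K_0$. Separately, since each $f_k$ is continuous on the compact set $X$, it is bounded, and I would want a uniform bound $\sup_k \sup_{x\in X} |f_k(x)| \leq B$; this follows because the $f_k$ converge uniformly to the (necessarily bounded) $f$, so all but finitely many $f_k$ are bounded by $\|f\|_\infty + 1$, and the finitely many remaining ones are each bounded. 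Hence also $|f| \leq B$, and $|f_k - f| \leq 2B$ for every $k$.

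Next I would estimate, for any $x \in X$,
\begin{equation*}
  \left| \sum_{k=0}^\infty p_k f_k(x) - f(x) \right|
  \leq \sum_{k=0}^\infty p_k \, |f_k(x) - f(x)|
  = \sum_{k < K_0} p_k\,|f_k(x)-f(x)| + \sum_{k \geq K_0} p_k\,|f_k(x)-f(x)|,
\end{equation*}
using $\sum_k p_k = 1$. The second sum is bounded by $\varepsilon \sum_{k\geq K_0} p_k \leq \varepsilon$. The first sum is bounded by $2B \sum_{k < K_0} p_k = 2B\, \mathbb{P}(k < K_0)$. Now I invoke the hypothesis $\sigma(\mu) = o(\mu)$: by Chebyshev's inequality, $\mathbb{P}(|k - \mu| \geq \mu - K_0) \leq \sigma(\mu)^2 / (\mu - K_0)^2$, and for $\mu$ large enough that $\mu > K_0$, the event $\{k < K_0\}$ is contained in $\{|k-\mu| \geq \mu - K_0\}$. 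Since $\sigma(\mu)/\mu \to 0$ and $(\mu - K_0)/\mu \to 1$, we get $\sigma(\mu)^2/(\mu-K_0)^2 \to 0$ as $\mu \to \infty$, so there is $\mu_0$ with $2B\,\mathbb{P}(k<K_0) < \varepsilon$ for all $\mu \geq \mu_0$. Combining, $\sup_{x\in X}\left| \sum_k p_k f_k(x) - f(x) \right| < 2\varepsilon$ for $\mu \geq \mu_0$, which gives the claim (in fact uniformly on $X$, slightly stronger than the pointwise statement written).

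The main obstacle is the bookkeeping around the uniform bound $B$ on the family $\{f_k\}$: the lemma as stated only gives continuity of each $f_k$ on compact $X$ (hence individual boundedness) plus uniform convergence, and one must combine these correctly to get a single constant controlling $|f_k - f|$ for \emph{all} $k$ simultaneously. Once that is in hand, the argument is a routine ``bulk plus tail'' split with Chebyshev supplying the tail bound; the role of the moment hypothesis $\sigma = o(\mu)$ is precisely to force the probability mass below any fixed threshold $K_0$ to vanish as $\mu \to \infty$. One should also note the minor index mismatch (the lemma sums from $k=0$ while $p_k$ lives on $\mathbb{Z}^+$ and the $f_k$ are given for $k \geq 1$); this is harmless and I would absorb it by treating any $k=0$ term as having zero mass or by relabeling.
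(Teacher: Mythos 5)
Your proof is correct, and it uses the same basic strategy as the paper: split the sum into a low-$k$ head controlled by Chebyshev's inequality and a high-$k$ bulk controlled by the uniform convergence $f_k \to f$, after first securing a uniform bound on the family $\{f_k\}$ (the paper simply normalizes to $|f_k|\leq 1$ ``without loss of generality,'' which your boundedness argument justifies). The one genuine difference is where you cut: you split at a \emph{fixed} threshold $K_0$ depending only on $\varepsilon$, so the Chebyshev deviation is $\mu - K_0 \sim \mu$ and the head is bounded by $2B\,\sigma^2/(\mu-K_0)^2 \to 0$ using exactly the stated hypothesis $\sigma = o(\mu)$. The paper instead cuts at the $\mu$-dependent point $K = \lfloor \mu - \mu^a\rfloor$ with $a<1$, which yields a head bound of $(\sigma/\mu^a)^2$; for that to vanish one actually needs $\sigma = o(\mu^a)$ for some $a<1$, which is slightly stronger than $\sigma = o(\mu)$ (e.g.\ $\sigma = \mu/\log\mu$ satisfies the lemma's hypothesis but defeats every fixed $a<1$). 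Your fixed-cutoff version therefore proves the lemma under precisely the hypothesis as stated, and is the cleaner argument; the paper's choice is harmless for its applications (Poisson, $\sigma=\sqrt{\mu}$) but papers over this point.
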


\begin{proof}
  Suppose $0 \leq a < 1$ and let $K = \lfloor \mu-\mu^a \rfloor$. Then,
  \begin{equation}
    g = \sum_{k=0}^{\infty} p_k f_k 
    = \sum_{k=0}^{K} p_k f_k + \sum_{k=K+1}^\infty p_k f_k.
    \label{eq:splitsum}
  \end{equation}
  Since $f_k \to f$ uniformly as $k \to \infty$,
  for any $\epsilon > 0$ we can choose $\mu$
  large enough that
  \begin{equation}
    \label{eq:epsilon}
    |f_k(x) - f(x)| < \epsilon
  \end{equation} 
  for all $k > K$ and all $x \in X$. 
  Without loss of generality, assume that $|f_k| \leq 1$ for all $k$. Then,
  \begin{equation*}
    |g-f| \leq \left( \frac{\sigma}{\mu^a} \right)^2 + \epsilon .
  \end{equation*}
  The $\sigma / \mu^a$ term is a consequence of the 
  Chebyshev inequality \citep{bollobas2001a}
  applied to the first sum in \eqref{eq:splitsum}.
  Since $\sigma$ grows sublinearly in $\mu$, this
  term vanishes for some $0 \leq a <1$
  when we take the limit $\mu \to \infty$.
  The $\epsilon$ term comes from using \eqref{eq:epsilon}
  in the second sum in \eqref{eq:splitsum}, 
  and it can be made arbitrarily small.
\end{proof}

\section{Beta function representation of $F_k$}
\label{ax:betafn}

We now show how,
when $\stochf$ is the tent map \eqref{eq:tentmap},
the map 
$F_k(\rho; \stochf)$
can be written in terms of incomplete regularized beta functions.
First, use the piecewise form of 
Eqn.~\eqref{eq:tentmap} to write
\begin{align}
  F_k(\rho) &= 
  \sum_{j=0}^M \binom{k}{j} \rho^j (1-\rho)^{k-j} \left( \frac{2j}{k} \right) 
  \nonumber \\
  &\quad 
  + \sum_{j=M+1}^{k} \binom{k}{j} \rho^j (1-\rho)^{k-j} 
  \left( 2-\frac{2j}{k} \right) \nonumber \\
  &= 2 - 2 \rho 
  - 2 \sum_{j=0}^M \binom{k}{j} \rho^j (1-\rho)^{k-j} \nonumber \\
  &\quad 
  + \left( \frac{4}{k} \right)
  \sum_{j=0}^M \binom{k}{j} \rho^j (1-\rho)^{k-j} j .
  \label{eq:Fk1}
\end{align}
We have used the fact that the binomial distribution
$\binom{k}{j} \rho^j(1-\rho)^{k-j}$
sums to one and has mean $k \rho$.
For $n \leq M$, we have the identity
\begin{equation}
  \label{eq:factorialmoments}
  \sum_{j=0}^{M} (j)_n \binom{k}{j} \rho^j (1-\rho)^{k-j} = %&= \nonumber \\
  % \sum_{j=n}^{M} (k)_n \binom{k-n}{j-n} \rho^j (1-\rho)^{k-j} &= \nonumber \\
  % \rho^n (k)_n \sum_{j=0}^{M-n} \binom{k-n}{j} \rho^j (1-\rho)^{k-n-j} &= \nonumber\\
  % & \quad 
  \rho^n (k)_n I_{1-\rho}( k-M, M-n+1)
\end{equation}
where $I_{x}(a,b)$ is the regularized incomplete beta function and
$(k)_n = k (k-1) \cdots (k - (n-1))$ is the falling factorial
\citep{winkler1972a, dlmf2012}. 
This is an expression for the partial (up to $M$) 
$n$th factorial moment of the binomial distribution with parameters
$k$ and $\rho$. Note that when $n=0$ we recover the well-known expression
for the binomial cumulative distribution function. 
From Eqns.~\eqref{eq:Fk1} and \eqref{eq:factorialmoments},
we arrive at Eqn.~\eqref{eq:Fkmapbetafn_bod}.

\bibliographystyle{apsrev}
\bibliography{library}

\begin{thebibliography}{40}
\expandafter\ifx\csname natexlab\endcsname\relax\def\natexlab#1{#1}\fi
\expandafter\ifx\csname bibnamefont\endcsname\relax
  \def\bibnamefont#1{#1}\fi
\expandafter\ifx\csname bibfnamefont\endcsname\relax
  \def\bibfnamefont#1{#1}\fi
\expandafter\ifx\csname citenamefont\endcsname\relax
  \def\citenamefont#1{#1}\fi
\expandafter\ifx\csname url\endcsname\relax
  \def\url#1{\texttt{#1}}\fi
\expandafter\ifx\csname urlprefix\endcsname\relax\def\urlprefix{URL }\fi
\providecommand{\bibinfo}[2]{#2}
\providecommand{\eprint}[2][]{\url{#2}}

\bibitem[{\citenamefont{Coombes et~al.}(2006)\citenamefont{Coombes, Doiron,
  Josi{\'c}, and Shea-Brown}}]{coombes2006a}
\bibinfo{author}{\bibfnamefont{S.}~\bibnamefont{Coombes}},
  \bibinfo{author}{\bibfnamefont{B.}~\bibnamefont{Doiron}},
  \bibinfo{author}{\bibfnamefont{K.}~\bibnamefont{Josi{\'c}}},
  \bibnamefont{and}
  \bibinfo{author}{\bibfnamefont{E.}~\bibnamefont{Shea-Brown}},
  \bibinfo{journal}{Philosophical Transactions of the Royal Society A:
  Mathematical, Physical and Engineering Sciences}
  \textbf{\bibinfo{volume}{364}}, \bibinfo{pages}{3301} (\bibinfo{year}{2006}).

\bibitem[{\citenamefont{Milo et~al.}(2002)\citenamefont{Milo, Shen-Orr,
  Itzkovitz, Kashtan, Chklovskii, and Alon}}]{milo2002network}
\bibinfo{author}{\bibfnamefont{R.}~\bibnamefont{Milo}},
  \bibinfo{author}{\bibfnamefont{S.}~\bibnamefont{Shen-Orr}},
  \bibinfo{author}{\bibfnamefont{S.}~\bibnamefont{Itzkovitz}},
  \bibinfo{author}{\bibfnamefont{N.}~\bibnamefont{Kashtan}},
  \bibinfo{author}{\bibfnamefont{D.}~\bibnamefont{Chklovskii}},
  \bibnamefont{and} \bibinfo{author}{\bibfnamefont{U.}~\bibnamefont{Alon}},
  \bibinfo{journal}{Science} \textbf{\bibinfo{volume}{298}},
  \bibinfo{pages}{824} (\bibinfo{year}{2002}).

\bibitem[{\citenamefont{Bascompte}(2009)}]{bascompte2009disentangling}
\bibinfo{author}{\bibfnamefont{J.}~\bibnamefont{Bascompte}},
  \bibinfo{journal}{Science} \textbf{\bibinfo{volume}{325}},
  \bibinfo{pages}{416} (\bibinfo{year}{2009}).

\bibitem[{\citenamefont{Rohani et~al.}(2010)\citenamefont{Rohani, Zhong, and
  King}}]{rohani2010contact}
\bibinfo{author}{\bibfnamefont{P.}~\bibnamefont{Rohani}},
  \bibinfo{author}{\bibfnamefont{X.}~\bibnamefont{Zhong}}, \bibnamefont{and}
  \bibinfo{author}{\bibfnamefont{A.}~\bibnamefont{King}},
  \bibinfo{journal}{Science} \textbf{\bibinfo{volume}{330}},
  \bibinfo{pages}{982} (\bibinfo{year}{2010}).

\bibitem[{\citenamefont{Ugander et~al.}(2012)\citenamefont{Ugander, Backstrom,
  Marlow, and Kleinberg}}]{ugander2012structural}
\bibinfo{author}{\bibfnamefont{J.}~\bibnamefont{Ugander}},
  \bibinfo{author}{\bibfnamefont{L.}~\bibnamefont{Backstrom}},
  \bibinfo{author}{\bibfnamefont{C.}~\bibnamefont{Marlow}}, \bibnamefont{and}
  \bibinfo{author}{\bibfnamefont{J.}~\bibnamefont{Kleinberg}},
  \bibinfo{journal}{PNAS} \textbf{\bibinfo{volume}{109}}, \bibinfo{pages}{5962}
  (\bibinfo{year}{2012}).

\bibitem[{\citenamefont{Centola}(2010)}]{centola2010a}
\bibinfo{author}{\bibfnamefont{D.}~\bibnamefont{Centola}},
  \bibinfo{journal}{Science} \textbf{\bibinfo{volume}{329}},
  \bibinfo{pages}{1194} (\bibinfo{year}{2010}).

\bibitem[{\citenamefont{Aldana et~al.}(2003)\citenamefont{Aldana, Coppersmith,
  and Kadanoff}}]{aldana2003a}
\bibinfo{author}{\bibfnamefont{M.}~\bibnamefont{Aldana}},
  \bibinfo{author}{\bibfnamefont{S.}~\bibnamefont{Coppersmith}},
  \bibnamefont{and} \bibinfo{author}{\bibfnamefont{L.~P.}
  \bibnamefont{Kadanoff}}, in \emph{\bibinfo{booktitle}{Perspectives and
  Problems in Nonlinear Science}}, edited by
  \bibinfo{editor}{\bibfnamefont{E.}~\bibnamefont{Kaplan}},
  \bibinfo{editor}{\bibfnamefont{J.~E.} \bibnamefont{Marsden}},
  \bibnamefont{and} \bibinfo{editor}{\bibfnamefont{K.~R.}
  \bibnamefont{Sreenivasan}} (\bibinfo{publisher}{Springer},
  \bibinfo{address}{New York}, \bibinfo{year}{2003}),
  chap.~\bibinfo{chapter}{2}, pp. \bibinfo{pages}{23--90},
  \bibinfo{note}{\url{http://arxiv.org/abs/nlin/0204062}}.

\bibitem[{\citenamefont{Stauffer and Aharony}(1994)}]{stauffer1994a}
\bibinfo{author}{\bibfnamefont{D.}~\bibnamefont{Stauffer}} \bibnamefont{and}
  \bibinfo{author}{\bibfnamefont{A.}~\bibnamefont{Aharony}},
  \emph{\bibinfo{title}{Introduction to Percolation Theory}}
  (\bibinfo{publisher}{Taylor and Francis}, \bibinfo{address}{London},
  \bibinfo{year}{1994}), \bibinfo{edition}{2nd} ed.

\bibitem[{\citenamefont{Newman}(2003)}]{newman2003a}
\bibinfo{author}{\bibfnamefont{M.~E.~J.} \bibnamefont{Newman}},
  \bibinfo{journal}{SIAM Review} \textbf{\bibinfo{volume}{45}},
  \bibinfo{pages}{167} (\bibinfo{year}{2003}).

\bibitem[{\citenamefont{Schneidman et~al.}(2006)\citenamefont{Schneidman,
  Berry, Segev, and Bialek}}]{schneidman2006a}
\bibinfo{author}{\bibfnamefont{E.}~\bibnamefont{Schneidman}},
  \bibinfo{author}{\bibfnamefont{M.~J.} \bibnamefont{Berry}},
  \bibinfo{author}{\bibfnamefont{R.}~\bibnamefont{Segev}}, \bibnamefont{and}
  \bibinfo{author}{\bibfnamefont{W.}~\bibnamefont{Bialek}},
  \bibinfo{journal}{Nature} \textbf{\bibinfo{volume}{440}},
  \bibinfo{pages}{1007} (\bibinfo{year}{2006}).

\bibitem[{\citenamefont{Campbell et~al.}(2011)\citenamefont{Campbell, Yang,
  Albert, and Shea}}]{campbell2011a}
\bibinfo{author}{\bibfnamefont{C.}~\bibnamefont{Campbell}},
  \bibinfo{author}{\bibfnamefont{S.}~\bibnamefont{Yang}},
  \bibinfo{author}{\bibfnamefont{R.}~\bibnamefont{Albert}}, \bibnamefont{and}
  \bibinfo{author}{\bibfnamefont{K.}~\bibnamefont{Shea}},
  \bibinfo{journal}{Proceedings of the National Academy of Sciences}
  \textbf{\bibinfo{volume}{108}}, \bibinfo{pages}{197} (\bibinfo{year}{2011}).

\bibitem[{\citenamefont{Granovetter}(1978)}]{granovetter1978a}
\bibinfo{author}{\bibfnamefont{M.}~\bibnamefont{Granovetter}},
  \bibinfo{journal}{American Journal of Sociology}
  \textbf{\bibinfo{volume}{83}}, \bibinfo{pages}{1420} (\bibinfo{year}{1978}).

\bibitem[{\citenamefont{Granovetter and Soong}(1986)}]{granovetter1986a}
\bibinfo{author}{\bibfnamefont{M.}~\bibnamefont{Granovetter}} \bibnamefont{and}
  \bibinfo{author}{\bibfnamefont{R.}~\bibnamefont{Soong}},
  \bibinfo{journal}{Journal of Economic Behavior and Organization}
  \textbf{\bibinfo{volume}{7}}, \bibinfo{pages}{83} (\bibinfo{year}{1986}).

\bibitem[{\citenamefont{Galam}(2008)}]{galam2008a}
\bibinfo{author}{\bibfnamefont{S.}~\bibnamefont{Galam}},
  \bibinfo{journal}{International Journal of Modern Physics C}
  \textbf{\bibinfo{volume}{19}}, \bibinfo{pages}{409} (\bibinfo{year}{2008}).

\bibitem[{\citenamefont{Schelling}(1971)}]{schelling1971a}
\bibinfo{author}{\bibfnamefont{T.~C.} \bibnamefont{Schelling}},
  \bibinfo{journal}{Journal of Mathematical Sociology}
  \textbf{\bibinfo{volume}{1}}, \bibinfo{pages}{143} (\bibinfo{year}{1971}).

\bibitem[{\citenamefont{Schelling}(1973)}]{schelling1973a}
\bibinfo{author}{\bibfnamefont{T.~C.} \bibnamefont{Schelling}},
  \bibinfo{journal}{Journal of Conflict Resolution}
  \textbf{\bibinfo{volume}{17}}, \bibinfo{pages}{381} (\bibinfo{year}{1973}).

\bibitem[{\citenamefont{Watts}(2002)}]{watts2002a}
\bibinfo{author}{\bibfnamefont{D.~J.} \bibnamefont{Watts}},
  \bibinfo{journal}{PNAS} \textbf{\bibinfo{volume}{99}}, \bibinfo{pages}{5766}
  (\bibinfo{year}{2002}).

\bibitem[{\citenamefont{Dodds and Watts}(2004)}]{dodds2004a}
\bibinfo{author}{\bibfnamefont{P.~S.} \bibnamefont{Dodds}} \bibnamefont{and}
  \bibinfo{author}{\bibfnamefont{D.~J.} \bibnamefont{Watts}},
  \bibinfo{journal}{Phys. Rev. Lett.} \textbf{\bibinfo{volume}{92}},
  \bibinfo{pages}{218701} (\bibinfo{year}{2004}).

\bibitem[{\citenamefont{Dodds et~al.}(2013)\citenamefont{Dodds, Harris, and
  Danforth}}]{dodds2013a}
\bibinfo{author}{\bibfnamefont{P.~S.} \bibnamefont{Dodds}},
  \bibinfo{author}{\bibfnamefont{K.~D.} \bibnamefont{Harris}},
  \bibnamefont{and} \bibinfo{author}{\bibfnamefont{C.~M.}
  \bibnamefont{Danforth}}, \bibinfo{journal}{Phys. Rev. Lett.}
  \textbf{\bibinfo{volume}{110}}, \bibinfo{pages}{158701}
  (\bibinfo{year}{2013}), \bibinfo{note}{\url{http://arxiv.org/abs/1208.0255}}.

\bibitem[{\citenamefont{Simmel}(1957)}]{simmel1957a}
\bibinfo{author}{\bibfnamefont{G.}~\bibnamefont{Simmel}},
  \bibinfo{journal}{American Journal of Sociology}
  \textbf{\bibinfo{volume}{62}}, \bibinfo{pages}{541} (\bibinfo{year}{1957}).

\bibitem[{\citenamefont{Bollob\'{a}s}(2001)}]{bollobas2001a}
\bibinfo{author}{\bibfnamefont{B.}~\bibnamefont{Bollob\'{a}s}},
  \emph{\bibinfo{title}{Random Graphs}} (\bibinfo{publisher}{Cambridge
  University Press}, \bibinfo{year}{2001}), \bibinfo{edition}{2nd} ed.

\bibitem[{Note1()}]{Note1}
Note1, \bibinfo{note}{when there are isolated nodes, their degrees are 0, and
  thus the denominator in the master equation is zero and $D$ is singular. If
  the initial network contains isolated nodes, we set all entries in the
  corresponding rows of $T$ to zero.}

\bibitem[{\citenamefont{Gleeson}(2008)}]{gleeson2008a}
\bibinfo{author}{\bibfnamefont{J.~P.} \bibnamefont{Gleeson}},
  \bibinfo{journal}{Phys. Rev. E} \textbf{\bibinfo{volume}{77}},
  \bibinfo{pages}{046117} (\bibinfo{year}{2008}).

\bibitem[{\citenamefont{West}(2001)}]{west2001a}
\bibinfo{author}{\bibfnamefont{D.~B.} \bibnamefont{West}},
  \emph{\bibinfo{title}{Introduction to Graph Theory}}
  (\bibinfo{publisher}{Prentice Hall}, \bibinfo{address}{Upper Saddle River,
  NJ}, \bibinfo{year}{2001}), \bibinfo{edition}{2nd} ed.

\bibitem[{\citenamefont{Oliveira}(2010)}]{oliveira2009a}
\bibinfo{author}{\bibfnamefont{R.~I.} \bibnamefont{Oliveira}}
  (\bibinfo{year}{2010}), \bibinfo{note}{\url{http://arxiv.org/abs/0911.0600}},
  \eprint{0911.0600}.

\bibitem[{\citenamefont{Derrida and Pomeau}(1986)}]{derrida1986a}
\bibinfo{author}{\bibfnamefont{B.}~\bibnamefont{Derrida}} \bibnamefont{and}
  \bibinfo{author}{\bibfnamefont{Y.}~\bibnamefont{Pomeau}},
  \bibinfo{journal}{Europhys. Lett.} \textbf{\bibinfo{volume}{1}},
  \bibinfo{pages}{45} (\bibinfo{year}{1986}).

\bibitem[{\citenamefont{Melnik et~al.}(2011)\citenamefont{Melnik, Hackett,
  Porter, Mucha, and Gleeson}}]{melnik2011a}
\bibinfo{author}{\bibfnamefont{S.}~\bibnamefont{Melnik}},
  \bibinfo{author}{\bibfnamefont{A.}~\bibnamefont{Hackett}},
  \bibinfo{author}{\bibfnamefont{M.~A.} \bibnamefont{Porter}},
  \bibinfo{author}{\bibfnamefont{P.~J.} \bibnamefont{Mucha}}, \bibnamefont{and}
  \bibinfo{author}{\bibfnamefont{J.~P.} \bibnamefont{Gleeson}},
  \bibinfo{journal}{Phys. Rev. E} \textbf{\bibinfo{volume}{83}},
  \bibinfo{pages}{036112} (\bibinfo{year}{2011}).

\bibitem[{\citenamefont{Phillips}(2003)}]{phillips2003a}
\bibinfo{author}{\bibfnamefont{G.~M.} \bibnamefont{Phillips}},
  \emph{\bibinfo{title}{Interpolation and Approximation by Polynomials}}
  (\bibinfo{publisher}{Springer}, \bibinfo{year}{2003}), \bibinfo{note}{see
  chapter 7}.

\bibitem[{\citenamefont{{Pe\~{n}a}}(1999)}]{pena1999a}
\bibinfo{editor}{\bibfnamefont{J.~M.} \bibnamefont{{Pe\~{n}a}}}, ed.,
  \emph{\bibinfo{title}{Shape Preserving Representations in Computer-Aided
  Geometric Design}} (\bibinfo{publisher}{Nova Science Publishers, Inc},
  \bibinfo{year}{1999}).

\bibitem[{\citenamefont{Payne et~al.}(2011)\citenamefont{Payne, Harris, and
  Dodds}}]{payne2011a}
\bibinfo{author}{\bibfnamefont{J.~L.} \bibnamefont{Payne}},
  \bibinfo{author}{\bibfnamefont{K.~D.} \bibnamefont{Harris}},
  \bibnamefont{and} \bibinfo{author}{\bibfnamefont{P.~S.} \bibnamefont{Dodds}},
  \bibinfo{journal}{Phys. Rev. E} \textbf{\bibinfo{volume}{84}},
  \bibinfo{pages}{016110} (\bibinfo{year}{2011}),
  \urlprefix\url{http://link.aps.org/doi/10.1103/PhysRevE.84.016110}.

\bibitem[{\citenamefont{Dodds et~al.}(2011)\citenamefont{Dodds, Harris, and
  Payne}}]{dodds2011a}
\bibinfo{author}{\bibfnamefont{P.~S.} \bibnamefont{Dodds}},
  \bibinfo{author}{\bibfnamefont{K.~D.} \bibnamefont{Harris}},
  \bibnamefont{and} \bibinfo{author}{\bibfnamefont{J.~L.} \bibnamefont{Payne}},
  \bibinfo{journal}{Phys. Rev. E} \textbf{\bibinfo{volume}{83}},
  \bibinfo{pages}{056122} (\bibinfo{year}{2011}),
  \urlprefix\url{http://link.aps.org/doi/10.1103/PhysRevE.83.056122}.

\bibitem[{\citenamefont{Gleeson and Cahalane}(2007)}]{gleeson2007a}
\bibinfo{author}{\bibfnamefont{J.~P.} \bibnamefont{Gleeson}} \bibnamefont{and}
  \bibinfo{author}{\bibfnamefont{D.~J.} \bibnamefont{Cahalane}},
  \bibinfo{journal}{Phys. Rev. E} \textbf{\bibinfo{volume}{75}},
  \bibinfo{pages}{056103} (\bibinfo{year}{2007}),
  \urlprefix\url{http://link.aps.org/doi/10.1103/PhysRevE.75.056103}.

\bibitem[{\citenamefont{Alligood et~al.}(1996)\citenamefont{Alligood, Sauer,
  and Yorke}}]{alligood1996a}
\bibinfo{author}{\bibfnamefont{K.~T.} \bibnamefont{Alligood}},
  \bibinfo{author}{\bibfnamefont{T.~D.} \bibnamefont{Sauer}}, \bibnamefont{and}
  \bibinfo{author}{\bibfnamefont{J.~A.} \bibnamefont{Yorke}},
  \emph{\bibinfo{title}{Chaos: An Introduction to Dynamical Systems}}
  (\bibinfo{publisher}{Springer}, \bibinfo{year}{1996}).

\bibitem[{\citenamefont{{NIST}}(2012)}]{dlmf2012}
\bibinfo{author}{\bibnamefont{{NIST}}}, \emph{\bibinfo{title}{{Digital Library
  of Mathematical Functions}}} (\bibinfo{year}{2012}),
  \bibinfo{note}{\url{http://dlmf.nist.gov}}.

\bibitem[{\citenamefont{Feigenbaum}(1978)}]{feigenbaum1978a}
\bibinfo{author}{\bibfnamefont{M.~J.} \bibnamefont{Feigenbaum}},
  \bibinfo{journal}{Journal of Statistical Physics}
  \textbf{\bibinfo{volume}{19}}, \bibinfo{pages}{25} (\bibinfo{year}{1978}).

\bibitem[{\citenamefont{Feigenbaum}(1979)}]{feigenbaum1979a}
\bibinfo{author}{\bibfnamefont{M.~J.} \bibnamefont{Feigenbaum}},
  \bibinfo{journal}{Journal of Statistical Physics}
  \textbf{\bibinfo{volume}{21}}, \bibinfo{pages}{669} (\bibinfo{year}{1979}).

\bibitem[{\citenamefont{Pachucki et~al.}(2011)\citenamefont{Pachucki, Jacques,
  and Christakis}}]{pachucki2011a}
\bibinfo{author}{\bibfnamefont{M.~A.} \bibnamefont{Pachucki}},
  \bibinfo{author}{\bibfnamefont{P.~F.} \bibnamefont{Jacques}},
  \bibnamefont{and} \bibinfo{author}{\bibfnamefont{N.~A.}
  \bibnamefont{Christakis}}, \bibinfo{journal}{American Journal of Public
  Health} \textbf{\bibinfo{volume}{101}}, \bibinfo{pages}{2170}
  (\bibinfo{year}{2011}).

\bibitem[{\citenamefont{Centola}(2011)}]{centola2011a}
\bibinfo{author}{\bibfnamefont{D.}~\bibnamefont{Centola}},
  \bibinfo{journal}{Science} \textbf{\bibinfo{volume}{334}},
  \bibinfo{pages}{1269} (\bibinfo{year}{2011}).

\bibitem[{\citenamefont{Romero et~al.}(2011)\citenamefont{Romero, Meeder, and
  Kleinberg}}]{romero2011a}
\bibinfo{author}{\bibfnamefont{D.~M.} \bibnamefont{Romero}},
  \bibinfo{author}{\bibfnamefont{B.}~\bibnamefont{Meeder}}, \bibnamefont{and}
  \bibinfo{author}{\bibfnamefont{J.}~\bibnamefont{Kleinberg}}, in
  \emph{\bibinfo{booktitle}{Proc. 20th ACM International World Wide Web
  Conference}} (\bibinfo{year}{2011}).

\bibitem[{\citenamefont{Winkler et~al.}(1972)\citenamefont{Winkler, Roodman,
  and Britney}}]{winkler1972a}
\bibinfo{author}{\bibfnamefont{R.~L.} \bibnamefont{Winkler}},
  \bibinfo{author}{\bibfnamefont{G.~M.} \bibnamefont{Roodman}},
  \bibnamefont{and} \bibinfo{author}{\bibfnamefont{R.~R.}
  \bibnamefont{Britney}}, \bibinfo{journal}{Management Science}
  \textbf{\bibinfo{volume}{19}}, \bibinfo{pages}{290} (\bibinfo{year}{1972}).

\end{thebibliography}

\end{document}